
\documentclass[11pt]{article}
\usepackage[margin=1.0in]{geometry}
\usepackage{mathtools}
\usepackage{amssymb,amsthm}
\usepackage{graphicx}
\graphicspath{{figures/}}
\usepackage{color}
\usepackage{lineno}
\usepackage{color} 
\usepackage{setspace}

\mathtoolsset{showonlyrefs,showmanualtags}

\usepackage{chemfig}


\newcommand{\jao}[1] {}
\renewcommand{\jao}[1] {{\color{red}{\textbf{JAO:} #1}}}
\newcommand{\jwb}[1] {}
\renewcommand{\jwb}[1] {{\color{blue}{\textbf{JWB:} #1}}}
\newcommand{\pt}[1] {}
\renewcommand{\pt}[1] {{\color{orange}{\textbf{PT:} #1}}}

\newcommand{\R}{\mathbb{R}}
\newcommand{\lap}{\mbox{$\cal L$}}
\newcommand{\ra}{\rightarrow}

\newtheorem{claim}{Claim}

\usepackage{letltxmacro}
\LetLtxMacro{\originaleqref}{\eqref}
\renewcommand{\eqref}{Eq.\,\originaleqref}


\setlength\parindent{0pt}
\spacing{1}

\makeatletter
\renewcommand{\maketitle}{\bgroup\setlength{\parindent}{0pt}
\begin{flushleft}
\textbf{\@title}
\@author
\end{flushleft}\egroup
}
\makeatother

\title{\Large Thermodynamic bounds on ultrasensitivity in covalent switching \\[2pt]}

\author{
Jeremy A. Owen$^{1,\ast,\dagger}$, 
Pranay Talla$^{2,\ast,\ddagger}$, 
John W. Biddle$^{3,\S}$, 
Jeremy Gunawardena$^{3,\vert\vert}$}

\begin{document}
\maketitle

{\small
$^{1}$Department of Physics, Massachusetts Institute of Technology, Cambridge, MA, USA

$^{2}$Horace Greeley High School, Chappaqua, NY, USA

$^{3}$Department of Systems Biology, Harvard Medical School, Boston, MA, USA \\[1pt]

$^{\ast}$These authors contributed equally \\[1pt]

$^{\dagger}$Current address: Department of Chemistry, Princeton University, NJ 08540, USA \\[1pt]
$^{\ddagger}$Current address: Columbia College, Columbia University, New York, NY 10027, USA \\[1pt]
$^{\S}$Current address: Holy Cross College, Notre Dame, IN 46556, USA \\[1pt]

$^{\vert\vert}$Corresponding author: Jeremy Gunawardena (\tt{jeremy@hms.harvard.edu})
}

\vspace{0.1in}
Dated: \today

\vspace{0.6in}
{\noindent
ABSTRACT \\[1em]
\textbf{Switch-like motifs are among the basic building blocks of biochemical networks. A common motif that can serve as an ultrasensitive switch consists of two enzymes acting antagonistically on a substrate, one making and the other removing a covalent modification. To work as a switch, such covalent modification cycles must be held out of thermodynamic equilibrium by continuous expenditure of energy. Here, we exploit the linear framework for timescale separation to establish tight bounds on the performance of any covalent-modification switch, in terms of the chemical potential difference driving the cycle. The bounds apply to arbitrary enzyme mechanisms, not just Michaelis-Menten, with arbitrary rate constants, and thereby reflect fundamental physical constraints on covalent switching.}

\pagebreak

\renewcommand\linenumberfont{\normalfont\small}
\setlength{\parindent}{15pt}
\doublespacing



\section*{Introduction}

The covalent modification cycle is a ubiquitous motif in biochemical networks. In this motif, a forward modifying enzyme, $E$, covalently attaches a modifying group to a substrate, $S$, thereby converting it from an unmodified state, $S_0$, to a modified state, $S_1$; and a reverse demodifying enzyme, $F$, removes the modifying group, converting $S_1$ back to $S_0$ (Figure \ref{fig:phospho}(a)). Phosphorylation is one the best known forms of covalent modification but many others are known \cite{psg12, Ree2018, walsh} and new forms of modification continue to be uncovered \cite{fmm19}. The substrate, $S$, may be a protein, in which case modifications are referred to as post-translational modifications, but they may also occur on small molecules. For the modification cycles considered here, the modifying group is a small chemical moiety obtained from a donor, such as a phosphate group obtained from ATP. Polypeptide modifications, such as ubiquitin, require a more complex cascade of enzymes for covalent attachment to their substrates and fall outside the scope of this paper. 

The antagonistic structure of covalent modification cycles was difficult to understand at first---why simultaneously attach a modifying group and also remove it?---and led to them being referred to in the older literature as ``futile cycles'' \cite{smu68}. In fact, the forward and reverse enzymes allow the balance of $S_1$ and $S_0$, measured, for instance, by the ratio of their steady-state concentrations, $[S_1]/[S_0]$, to be maintained away from the value it would have at chemical equilibrium. In other words, a covalent modification cycle can act as a switch (Figure \ref{fig:phospho}(b)), in which the value of $[S_1]/[S_0]$ is modulated by changing the levels of the forward or reverse enzymes \cite{Goldbeter_Koshland_1981, Stadtman_Chock_1977}. The idea of a biochemical switch becomes more natural in the context of cellular information processing and such switches have been found to play key roles in signal transduction \cite{bblhwk, Chen2001, Torrecilla2006,Ventura2010}, gene regulation \cite{Berger2001,Lehembre2000}, the cell cycle \cite{Fisher2012, Okamura2010}, and metabolism \cite{dcg12, Kustu1984}.

The operation of a covalent modification cycle relies on the continued presence of donor molecules to provide modifying groups. The cycle is driven by the chemical potential difference between the donor molecules, such as ATP, and the corresponding molecular residues after modification and demodification, such as ADP and inorganic phosphate P$_\mathrm{i}$. This chemical potential difference is maintained by core metabolic processes within the cell. It is akin to a battery in an electronic circuit and a modification cycle thereby operates away from thermodynamic equilibrium through continuous dissipation of energy \cite{Goldbeter_Koshland_1987,Qian_2003}. 

\begin{figure}
	\begin{center}
		\includegraphics[scale=0.6]{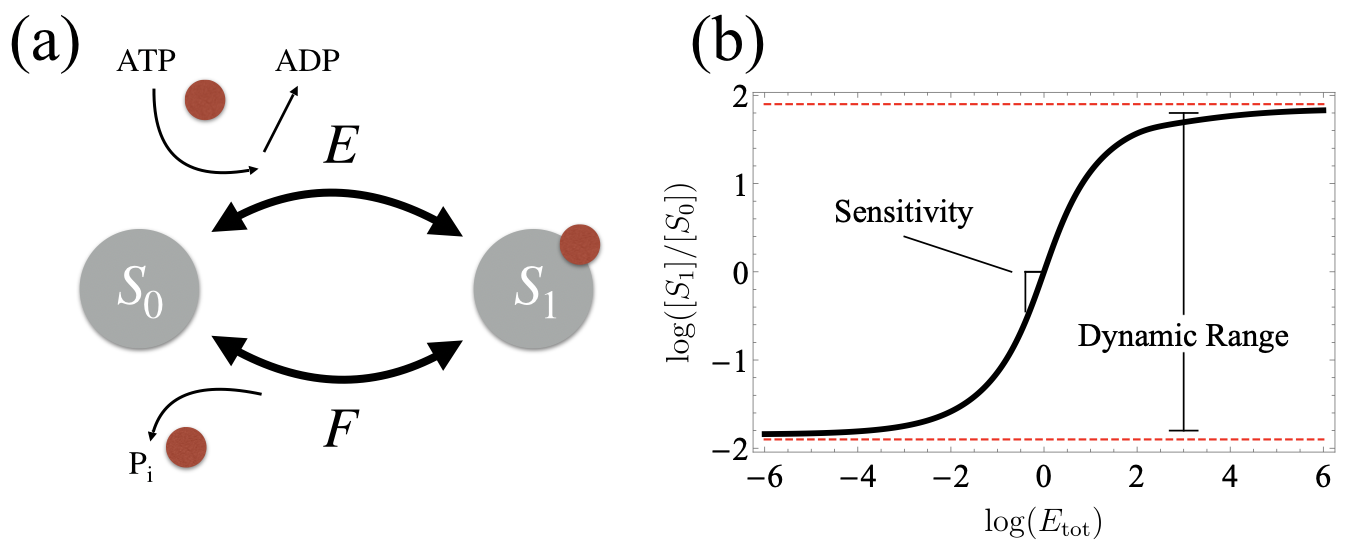}
	\end{center}
	\caption{{\bf Covalent modification cycle.} (a) Schematic of a phosphorylation-dephosphorylation cycle in which kinase $E$ modifies the substrate $S$ by covalent addition of a phosphate group (brown disc), donated by ATP, and phosphatase $F$ removes the modification by hydrolysis to release inorganic phosphate, $P_\mathrm{i}$. $S_0$ and $S_1$ denote the unphosphorylated and phosphorylated forms of $S$. (b) Covalent modification gives rise to a switch-like relationship between the total amounts of enzymes, here of the kinase $E$, and the steady-state substrate concentrations, which we quantify as illustrated by the \emph{sensitivity} and \emph{dynamic range}.}
	\label{fig:phospho}
\end{figure}

In seminal work, Goldbeter and Koshland performed a mathematical analysis of covalent modification cycles under the assumption of Michaelis-Menten kinetics for the modifying enzymes---catalysis proceeds via a single intermediate enzyme--substrate complex and product formation is irreversible \cite{Goldbeter_Koshland_1981}. The chemical reactions are:
\begin{equation}
\begin{array}{c}
E + S_0 \rightleftharpoons ES \to E + S_1 \\
F + S_1 \rightleftharpoons FS \to F + S_0
\end{array}
\label{e-mm}
\end{equation}

\noindent where $ES$ and $FS$ are enzymatic intermediates. When substrate is relatively abundant so that both enzymes are saturated, they found that this system can exhibit unlimited sensitivity to the concentrations of the modifying enzymes. For example, the logarithmic sensitivity, 
\begin{equation}
  \frac{\partial \log ([S_1]/[S_0])}{\partial \log E_\mathrm{tot}}
\label{e-plogs}
\end{equation}

\noindent which we refer to hereafter as the \emph{sensitivity}, can be made as large as desired, by varying rate constants. By way of comparison, if the relationship between $[S_1]/[S_0]$ and $E_\mathrm{tot}$ were described by a Hill function, $x^H/(1 + x^H)$, then the maximum sensitivity would be the Hill coefficient $H$. A sensitivity greater than $1$ is said to be ``ultrasensitive'' \cite{Ferrell2014}. 

The unlimited sensitivity found by Goldbeter and Koshland is physiologically implausible. It arises from the unrealistic assumption of irreversibility in the Michaelis-Menten reaction mechanism in \eqref{e-mm}. Although such an assumption has been nearly universal in quantitative studies of biochemical networks, it would have surprised Michaelis and Menten \cite{gun-mm, mm13} and its dangers have been repeatedly pointed out \cite{bblhwk,nbg20, oawmc}. In their in-vitro studies, Michaelis and Menten measured initial reaction rates, when product was not present, so that irreversibility was a reasonable assumption. But, since then, \eqref{e-mm} has been widely used in contexts, such as modification cycles, in which product is very much present and rebinding of product to enzyme is to be expected. Its continuing use has sometimes been justified on the grounds that modification and demodification reactions are often physiologically irreversible, in the sense that product is rarely converted back into substrate. However, enzyme mechanisms typically involve greater complexity than the simple Michaelis-Menten mechanism, with multiple intermediates and routes \cite{fersht}, and they may be physiologically irreversible despite product rebinding. The graph-theoretic linear framework for timescale separation \cite{gun-mt} allows realistic general reaction mechanisms to be analyzed in which such distinctions can be made. In previous work, we have analyzed modification cycles with realistic enzyme mechanisms and derived formulas for their switching capability in the limit of high substrate \cite{dcg12, xg11b}. The switching sensitivity is no longer unbounded but is now limited by the parameters of the switch.

A further difficulty with the irreversibility of \eqref{e-mm} is that it implies infinite entropy production. In reality, every reaction is reversible, although physiological conditions may make the reverse rate extremely low. As we will see below, it is the ratio of forward to reverse rates which yields the finite rate of entropy production. The Michaelis-Menten mechanism is therefore unsuitable for a thermodynamic analysis. Noting this, Qian \cite{Qian_2003} studied a minimal elaboration of the Goldbeter-Koshland cycle, in which both enzymes follow a fully reversible Michaelis-Menten mechanism. This mechanism still proceeds via a single intermediate complex but product can rebind. Qian found the relationship between the chemical potential difference driving the cycle and figures of merit, such as the sensitivity, of a switch based on the system.  

In the linear framework, realistic enzyme mechanisms can be analyzed with reversibility assumed throughout. We use this approach here to establish bounds on the switching \textit{dynamic range} and \textit{sensitivity} (Figure \ref{fig:phospho}(b)) of any covalent modification cycle in which the forward and reverse enzyme each follow their own realistic enzyme mechanism. We also show explicitly that these thermodynamic bounds can be approached as closely as desired.  Our work generalizes the analysis of Goldbeter and Koshland \cite{Goldbeter_Koshland_1981,Goldbeter_Koshland_1987} and the subsequent work of Qian \cite{Qian_2003, Qian_2007}, and reveals fundamental physical constraints, free from restrictive enzymological assumptions. 

\section*{Results}

\subsection*{Covalent modification cycles}

For present purposes, a covalent modification cycle is any chemical reaction network with mass action kinetics, built out of any number of reactions of the form \cite{Thomson_Gunawardena_2009, xg11b}:
\begin{equation}
\begin{split}
E + S_0 \rightleftharpoons (ES)_i \\ 
E + S_1   \rightleftharpoons (ES)_i \\ 
(ES)_i \rightleftharpoons (ES)_j \\
F + S_0  \rightleftharpoons (FS)_i \\ 
F + S_1 \rightleftharpoons (FS)_i \\ 
(FS)_i \rightleftharpoons (FS)_j.
\end{split}\label{gkloopcrn}
\end{equation}

Any such network can be viewed as a detailed realization of the schematic modification cycle illustrated in Figure \ref{fig:phospho}(a). This class of models encompasses the irreversible cycle studied by Goldbeter and Koshland, but also much more biochemically realistic ones reflecting complex enzymology, with any number of intermediates, and, in particular, the reversibility of enzymes.

The reactions \eqref{gkloopcrn} imply the conservation of the total concentration of substrate $S_\mathrm{tot}$: 
\begin{equation}
S_\mathrm{tot} = [S_0] + [S_1] +\sum_{i}[(ES)_i] + \sum_{j}[(FS)_j],  \label{subconserve}
\end{equation}
as well as the total concentrations of both enzymes, $E_\mathrm{tot}$ and $F_\mathrm{tot}$:
\begin{align}
E_\mathrm{tot} &= [E] + \sum_{i}[(ES)_i] \notag \\
F_\mathrm{tot} &= [F] + \sum_{j}[(FS)_j]. \label{enzconserve}
\end{align}

The assumption of mass-action kinetics gives a system of polynomial differential equations for the time evolution of the concentration of each chemical species. The equations are arrived at by summing the individual contributions to the rate of formation/destruction of each species due to each reaction.

Given any fixed choice of rate constants and the conserved quantities $S_\mathrm{tot}$, $E_\mathrm{tot}$, $F_\mathrm{tot}$, a covalent modification cycle admits a \textit{unique} steady state or dynamical fixed point \cite{dcg12}. This fact allows us to view steady-state quantities, such as the steady-state value of the ratio $[S_1]/[S_0]$, as \textit{functions} of the conserved quantities. For a general covalent modification cycle, the polynomial equations satisfied at steady state, implicitly defining this functional relationship, can be very complicated---having arbitrarily many terms and rate constants appearing in them. Nevertheless, they possess a basic structure, set by the schema \eqref{gkloopcrn}, that will enable us to apply a powerful algebraic approach---the linear framework---to make general statements about them.

\subsection*{Background on the linear framework}

\begin{figure}
	\begin{center}
		\includegraphics[scale=0.55]{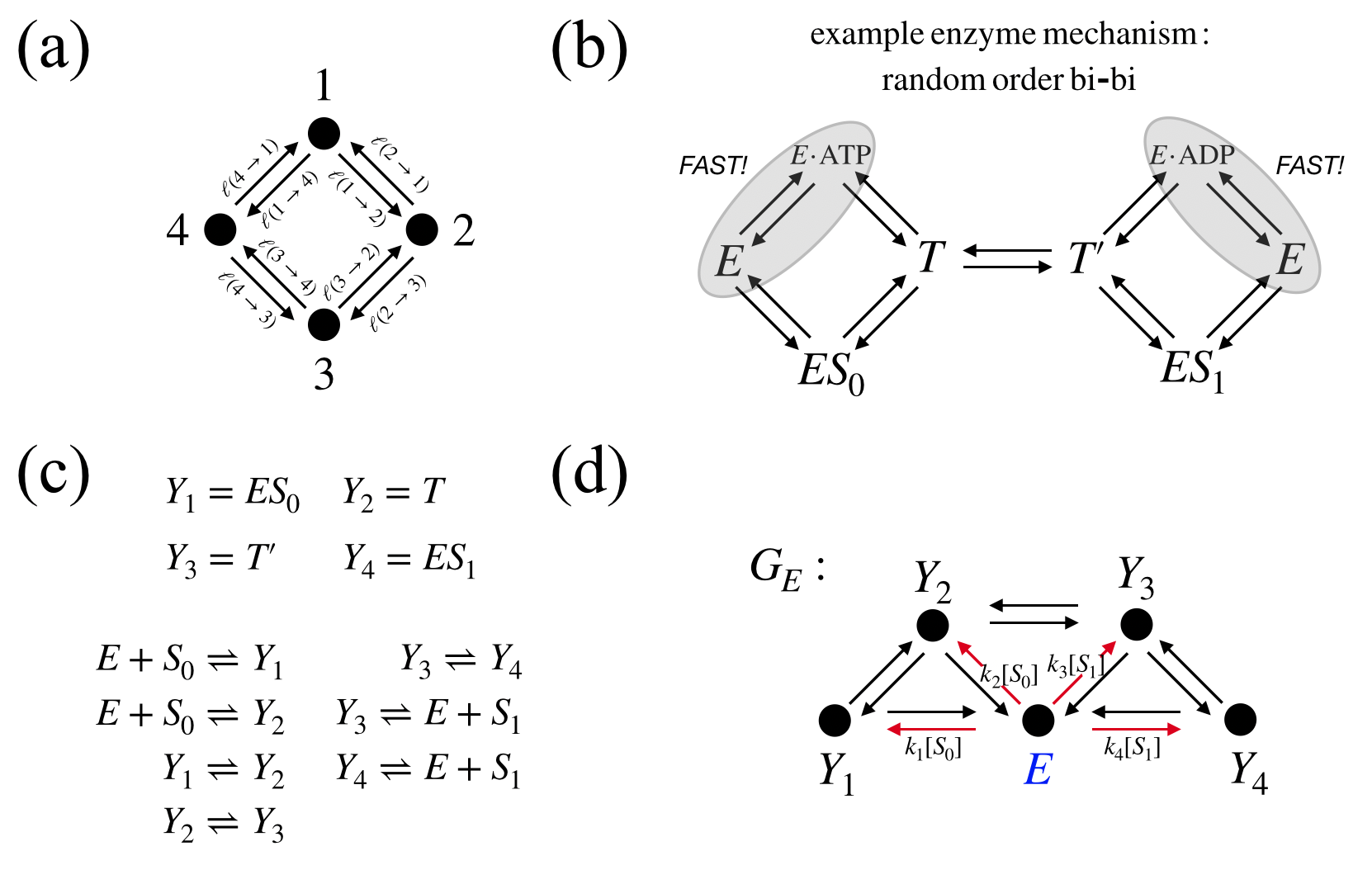}
	\end{center}
	\caption{{\bf Modeling enzyme mechanisms using the linear framework.}  (a) An example of a linear framework graph. (b) An example of a realistic enzyme mechanism, in which two substrates ($\mathrm{ATP}$ and $S_0$) bind to an enzyme $E$ in any order to form a ternary complex $T$, which is transformed into a complex $T'$ from which two products ($\mathrm{ADP}$ and $S_1$) are released in any order. To be able to model this mechanism using the grammar \eqref{e-grm}, the binding/unbinding of at least one of the substrates and one of the products to the bare enzyme must come to a rapid equilibrium (denoted by the gray ovals).  (c) Subject to this timescale assumption, the network in (b) can be cast in terms of the grammar \eqref{e-grm}. (d) A linear framework graph $G_E$ corresponding to the mechanism shown in (c).}
	\label{fig:graphs}
\end{figure}

Here, we briefly introduce the linear framework. In the following subsection, we will specialize to the case of covalent modification cycles. The framework was introduced in \cite{gun-mt}. A recent review \cite{kmg22} discusses the material needed here and should be consulted for more details and background. The framework revolves around finite, simple, directed graphs with labeled edges; an example graph is shown in Figure \ref{fig:graphs}(a). For the application considered here, the graph vertices, denoted by $1, 2, \cdots, n$, represent chemical species, the edges, denoted $i \ra j$, represent reactions and the labels, denoted $\ell(i \ra j)$, represent positive rates with dimensions of (time)$^{-1}$. The labels may be algebraic expressions that include time-varying concentrations of chemical species. We will discuss such labels further in the next section; for the present, to explain the machinery of the framework, we regard the labels as constant symbols. 

Let $G$ be a linear framework graph. Such a graph can be naturally given dynamics by assuming that each edge is a chemical reaction with the corresponding label as the rate constant for mass-action kinetics. Since an edge has only a single source vertex, the dynamics must be linear and is therefore described by a matrix differential equation, 
\begin{equation}
\frac{dx}{dt} = \lap(G).x \,,
\label{e-lapd}
\end{equation}
where $x = (x_1, \cdots, x_n)^T \in \R^n$ is the time-dependent column vector of vertex concentrations (here, $^T$ denotes transpose) and the linear operator, $\lap(G)$, is the Laplacian matrix of $G$. Since material is only moved around the graph, without being created or destroyed, there is a conservation law, at all times, $x_1 + \cdots + x_n = x_\mathrm{tot}$; equivalently, each column of $\lap(G)$ sums to zero.

We will be concerned with steady states, $x^*$, of systems described by graphs, so that $(dx/dt)|_{x = x^*} = 0$. Accordingly, $x^* \in \ker\lap(G)$. It can be shown that, if $G$ is strongly connected, then $\dim\ker\lap(G) = 1$. Recall that a graph is strongly connected if any two distinct vertices, $i$ and $j$, can be connected by a directed path, $i = i_1 \ra i_2 \ra \cdots \ra i_k = j$. A canonical basis element $\rho(G) \in \ker\lap(G)$ may be determined in terms of the edge labels by using the Matrix-Tree Theorem (MTT) of graph theory \cite{Hill_1966, King_Altman_1956, Mirzaev_Gunawardena_2013, Tutte_1948}. If $H$ is any subgraph of $G$, let $w(H)$ denote the product of the edge labels over the edges in $H$,
\[ w(H) = \prod_{i \ra j \in H} \ell(i \ra j) \,. \]
Recall that a spanning tree is a connected subgraph of $G$ that contains each vertex of $G$ (spanning) and has no cycles if edge directions are ignored (tree). It is said to be rooted at $i$ if $i$ is the only vertex with no outgoing edge (which orients the tree). Let $\Theta_i(G)$ denote the set of spanning trees of $G$ that are rooted at $i$.  Then, the MTT shows that,
\begin{equation}
\rho(G) = \sum_{T_i \in \Theta_i(G)} w(T_i) \,.
\label{e-MTT}
\end{equation}
Since $x^*$ must be proportional to $\rho(G) \in \ker\lap(G)$, it is straightforward to obtain the ratio of steady states in terms only of the edge labels,
\begin{equation}
\frac{x^*_i}{x^*_j} = \frac{\sum_{T_i \in \Theta_i(G)} w(T_i)}{\sum_{T_j \in \Theta_j(G)} w(T_j)} \,,
\label{e-main}
\end{equation}
and we will exploit this below.

\subsection*{Modeling a covalent modification cycle}

In previous work, post-translational modification systems, like the covalent modification cycle of Figure \ref{fig:phospho}(a), were modeled as interacting systems of linear framework graphs \cite{dcg12, xg11b}. This approach is also reviewed in \cite{kmg22}, which may be consulted for more details. An important feature of this approach is that enzyme reaction mechanisms can be substantially more general than the conventional Michaelis-Menten mechanism in \eqref{e-mm}, allowing in particular for reversibility and multiple intermediate complexes and thereby addressing the problems described in the Introduction. Specifically, an enzyme mechanism may be composed of any appropriate reactions from the ``grammar'',
\begin{equation}
E + S_* \ra Y_* \hspace{2em} Y_* \ra Y_* \hspace{2em} Y_* \ra E + S_* \,,
\label{e-grm}
\end{equation}
which transform between substrate and product. Here, we have used ``$_{*}$'' as a generic subscript to avoid index proliferation. $S_*$ denotes a substrate form, such as $S_0$ or $S_1$ in \eqref{e-mm}, and $Y_*$ denotes an intermediate complex, such as $E S$ in \eqref{e-mm}. The only requirement we place on the mechanism is that there must be no ``dead-end'' intermediate complexes that can be formed but not destroyed. We will see that this amounts to supposing that a certain corresponding graph, described below, is strongly connected.

The Michaelis-Menten mechanism in \eqref{e-mm}, or the reversible version used by Qian in \cite{Qian_2003}, can be constructed from the grammar in \eqref{e-grm}. But the grammar can also capture more complicated mechanisms, such as the  ``random order bi-bi'' mechanism \cite{seg93} pictured in Figure \ref{fig:graphs}(b), in which an enzyme has two substrates that are bound in either order and forms two products that are released in either order. Such mechanisms are important for forward modifying enzymes, like kinases, which use a secondary substrate, like ATP, to donate the modifying group (Figure \ref{fig:phospho}(a)) and release a secondary product, like ADP. Similarly, the enzyme mechanisms implied by the covalent modification cycle in \eqref{gkloopcrn} can also be accommodated in \eqref{e-grm}. 

In order to cast realistic enzyme mechanisms in our grammar, we must make an approximation, because in \eqref{e-grm}, substrates are not permitted to bind to intermediate complexes, so secondary substrates cannot be explicitly represented. Instead, as explained in more detail in \cite{xg11b}, we will assume that secondary-substrate binding occurs as a ``rapid equilibrium'' with either the free enzyme $E$ or some intermediate complex $Y_*$, so that the concentration of the secondary substrate can be absorbed into the appropriate rate constant. We additionally assume that the concentration of this secondary substrate is held constant---a reasonable assumption in the case of many forward modifying enzymes, especially kinases, where ATP concentration is held constant by background metabolic processes despite fluctuations in demand. Subject to these assumptions, the mechanism in Figure \ref{fig:graphs}(b) can be well-approximated by the reaction network shown in Figure \ref{fig:graphs}(c), which is a realization of the grammar in \eqref{e-grm}.

Any mechanism for an enzyme $E$ expressed in the grammar may be represented by a linear framework graph $G_E$ (Figure \ref{fig:graphs}(d)) in which the vertices are the free enzyme $E$ and the intermediate complexes $Y_*$, the edges correspond to the reactions in the mechanism and the labels correspond to the reaction rates. The time-dependent concentrations of the substrate forms $S_*$ appear in the labels for those edges in which a substrate form binds to the free enzyme. Accordingly, as can be seen in the example in Figure \ref{fig:graphs}(d), only those edges outgoing from the vertex $E$, which are colored red, have edge labels depending on the substrate concentrations. In keeping with the reversibility of the enzyme mechanism in Figure \ref{fig:graphs}(c), every transition in the graph $G_E$ in Figure 2(d) is reversible. We will always require that $G_E$ be strongly connected, and in particular that no intermediates form irreversibly.

The linear Laplacian dynamics on $G_E$, as given by \eqref{e-lapd}, is merely a rewriting of the dynamics of the enzyme mechanism under mass-action kinetics. But this construction will allow us to apply the MTT (i.e. in the form \eqref{e-main}) to algebraically express the variables associated to the vertices in terms of the edge labels.

The general form of the covalent modification cycle in \eqref{gkloopcrn} can easily be constructed within the grammar in \eqref{e-grm}: it consists of two enzymes $E$ and $F$, with arbitrary mechanisms obeying the grammar \eqref{e-grm}, and two inter-converting substrate forms $S_0$ and $S_1$. The reverse modifying enzyme, $F$, may follow a different mechanism to that of $E$. This is typically the case in reality, as the removal of a modifying group is often a single-substrate hydrolysis reaction. The ability to realistically represent the difference between the mechanisms of $E$ and $F$ is an important benefit of our approach. 

Applying the linear framework to a covalent modification cycle, we get two graphs: $G_E$, whose vertices are $E$ and its intermediates, and $G_F$, whose vertices are $F$ and its intermediates. Since $[S_0]$ and $[S_1]$ appear in the labels of the edges of $G_E$ and $G_F$ directed out of $E$ and $F$, respectively, they will also appear on the right hand side of \eqref{e-main}, but they do so in a limited way. Specifically, since each rooted spanning tree has at most one edge directed out of any vertex, they appear only linearly. This observation leads to the following simple relations at steady-state \cite{xg11b}: 
\begin{equation}
\begin{array}{rcl}
\sum_{i}\frac{[(ES)_i]}{[E]} & = & \frac{[S_0]}{K_0^E} + \frac{[S_1]}{K_1^E} \\
\sum_{j}\frac{[(FS)_j]}{[F]} & = & \frac{[S_0]}{K_0^F} + \frac{[S_1]}{K_1^F} 
\end{array}
\label{EIgraph}
\end{equation}
where $K_0^E, K_1^E, K_0^F, K_1^F$ are the \emph{total generalized Michaelis-Menten constants} (tgMMCs) of the covalent modification cycle \cite{Thomson_Gunawardena_2009, xg11b}. The tgMMCs depend only on rate constants and have dimensions of concentration.

Substituting expressions like \eqref{EIgraph} into the original polynomial steady-state equations yields a similarly compact expression for the substrate ratio (Eq. (13), \cite{xg11b}): 
\begin{equation}
\frac{[S_1]}{[S_0]} = \frac{c_0^E [E]+ c_0^F[F]}{c_1^E [E] + c_1^F [F]}
\label{SIgraph}
\end{equation}
where the quantities $c_0^E, c_1^E, c_0^F, c_1^F$ are the \emph{total generalized catalytic efficiencies} (tgCEs) of the covalent modification cycle \cite{Thomson_Gunawardena_2009, xg11b}. These quantities depend only on rate constants, and have dimensions of $\left(\text{concentration} \times \text{time}\right)^{-1}$. 

The three conservation laws and the relations \eqref{EIgraph} and \eqref{SIgraph} can now be combined to yield two equations involving only $[S_0]$ and $[S_1]$ as variables:
\begin{equation}
S_\mathrm{tot} = [S_0] + [S_1] + E_\mathrm{tot} \left(\frac{[S_0]/K_0^E+[S_1]/K_1^E}{1+[S_0]/K_0^E+[S_1]/K_1^E}\right) + F_\mathrm{tot} \left(\frac{[S_0]/K_0^F+[S_1]/K_1^F}{1+[S_0]/K_0^F+[S_1]/K_1^F}\right),\label{Invariant-1}
\end{equation}
\begin{equation}
\left(\frac{E_\mathrm{tot}}{F_\mathrm{tot}}\right)\left(1+\frac{[S_1]}{K_1^F}+\frac{[S_0]}{K_0^F}\right)\left(c_0^E[S_0]-c_1^E[S_1]\right)=\left(1+\frac{[S_0]}{K_0^E}+\frac{[S_1]}{K_1^E}\right)\left(c_1^F[S_1]-c_0^F[S_0]\right).\label{Invariant-2}
\end{equation}
For the purposes of understanding the steady-state dependence of $[S_0]$ and $[S_1]$ on the conserved quantities, all the possible complexity permitted by the schema in \eqref{gkloopcrn}, and all the freedom to choose rate constants, has been reduced to eight generalized parameters---the four tgMMCs and the four tgCEs.

\subsection*{Thermodynamic constraints on the tgCEs}

\begin{figure}
	\begin{center}
		\includegraphics[scale=0.55]{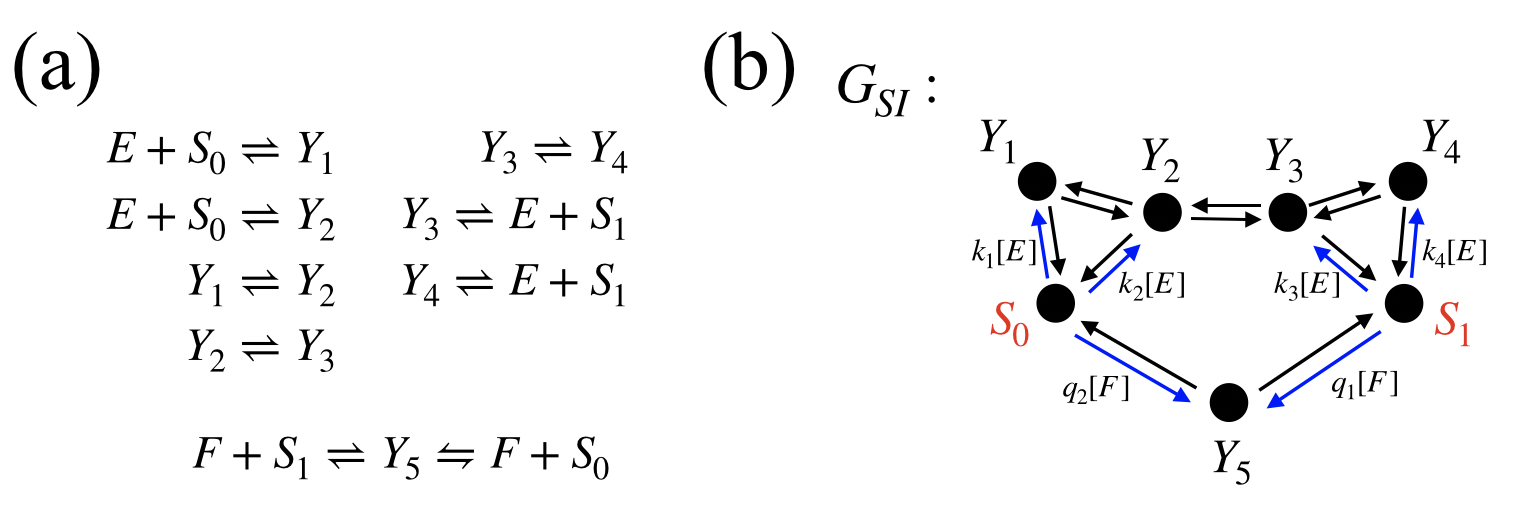}
	\end{center}
	\caption{{\bf An example covalent modification cycle and the graph $G_{SI}$.} (a) A covalent modification cycle where the enzyme $E$ obeys the random order bi-bi mechanism from Figure \ref{fig:graphs} and the enzyme $F$ acts by a reversible Michaelis-Menten mechanism. (b) The associated linear framework graph $G_{SI}$ involving the substrates and the intermediates.}
	\label{fig:gSI}
\end{figure}

Now we turn to thermodynamics.  We will assume from now on that every graph is reversible, in the sense that, if $i \ra j$, then also $j \ra i$. It is also important that the reverse edge $j \ra i$ represents the process that is the \textit{time-reverse} of that represented by $i \ra j$ and not just some other process for returning from $j$ to $i$ \cite{acs20}.

At thermodynamic equilibrium, each reaction occurs with the same frequency as its time-reverse. This principle of detailed balance \cite{lewis1925new} implies a relation between the \textit{rate constants} of reactions that form a cycle.  Consider, for example, an arbitrary cycle of reactions in a covalent modification cycle taking $S_0$ to $S_1$ via the enzyme $E$ and back to $S_0$ via the the enzyme $F$:
\begin{equation}
S_0   \underset{k_{-1}}{\stackrel{k_1 [E]}{\rightleftharpoons}}  (ES)_1  \underset{k_{-2}}{\stackrel{k_2}{\rightleftharpoons}}  \cdots  \rightleftharpoons (ES)_n   \underset{k_{-n}[E]}{\stackrel{k_n }{\rightleftharpoons}}  S_1   \underset{q_{-1}}{\stackrel{q_{1} [F]}{\rightleftharpoons}} (FS)_1  \rightleftharpoons \cdots  \underset{q_{-2}}{\stackrel{q_{2}}{\rightleftharpoons}}  (FS)_m   \underset{q_{-m} [F]}{\stackrel{q_{m}}{\rightleftharpoons}}  S_0  .
\end{equation}
For this cycle, detailed balance at thermodynamic equilibrium implies that
\begin{equation}
\frac{k_1 k_2 \cdots q_2 q_m}{k_{-1} k_{-2}  \cdots q_{-2} q_{-m}} = 1.
\label{db}
\end{equation}

Recall that some of the rate constants $k_i, q_i$ appearing above may conceal concentrations of cofactors we do not explicitly model. Holding these concentrations at fixed values away from their equilibrium values drives the system out of equilibrium into a nonequilibrium steady state. The log ratio of rates on the left hand side of \eqref{db}---known as the \textit{thermodynamic force}, or cycle affinity---quantifies the breaking of detailed balance and can often be identified as the entropy produced in the environment when the cycle is traversed, in units of the Boltzmann constant, $k_B$, \cite{esposito2010three, schnakenberg1976network, sei11}. For a chemical system held out of equilibrium by the presence of species assumed to have fixed concentration (either because there is a very large reservoir of them or because a fixed concentration is actively maintained), this will depend on a chemical potential difference $\Delta \mu$. For example, suppose $E$ is a kinase and $F$ a phosphatase, so that one complete realization of the cycle entails the binding of one molecule of ATP, the release of one molecule of ADP, and the release of one molecule of P$_{i}$. Then, 
\begin{equation}
\log\left(\frac{k_1 k_2 \cdots q_2 q_m}{k_{-1} k_{-2}  \cdots q_{-2} q_{-m}}\right) = \frac{\mu_\mathrm{ATP} - \mu_\mathrm{ADP} - \mu_\mathrm{P_\mathrm{i}}}{k_\mathrm{B}T} = \frac{\Delta \mu}{k_\mathrm{B}T} \approx 20\mbox{ to }30,
\end{equation}
under typical physiological conditions. In this work, we focus on the natural case in which the thermodynamic force is the \textit{same} around any cycle in which $E$ makes the modification of $S$ and $F$ removes it. In that case, we can show that: 
\begin{equation}
\log\left(\frac{c_0^E c_1^F}{c_1^E c_0^F}\right) = \frac{\Delta \mu}{k_\mathrm{B}T}. \label{Force-from-CE}
\end{equation}
In the general case, where the force is different around different cycles, the left hand side is bounded by the largest force. Without loss of generality, we take $c_0^F/c_1^F < c_0^E/c_1^E$ so that $\Delta \mu > 0$.

\eqref{Force-from-CE} is the physical constraint on the parameters of a general covalent modification cycle from which all our results descend. To prove it, it will be convenient to consider another kind of linear framework graph, denoted $G_{SI}$, whose vertices are the substrate forms and the intermediate complexes. $G_{SI}$ is an amalgam of $G_E$ and $G_F$ together with the substrate forms $S_0$ and $S_1$ and it is more convenient for a thermodynamic analysis of the cycle.\footnote{$G_{SI}$ is distinct from the graph on the substrate forms introduced in prior work involving the linear framework \cite{xg11b}.} $G_{SI}$ is shown for an example covalent modification cycle in Figure \ref{fig:gSI}. $G_{SI}$ is strongly connected as long as $G_E$ and $G_F$ are strongly connected, as we assume. 

To prove \eqref{Force-from-CE}, we begin with equation \eqref{SIgraph} from above, which says:
\begin{equation}\label{xu_ratio}
\frac{[S_1]}{[S_0]}=\frac{c_0^E[E]+c_0^F[F]}{c_1^E[E]+c_1^F[F]} = \frac{c_0^E([E]/[F])+c_0^F}{c_1^E([E]/[F])+c_1^F} 
\end{equation}
An alternative expression for this ratio can be obtained from the MTT applied to the graph $G_{SI}$. Since $G_{SI}$ is assumed to be reversible, it is helpful---to organize the expressions that arise from the MTT---to consider undirected spanning trees in the corresponding undirected graph $G^u_{SI}$, which is $G_{SI}$, but with the directions of the edges ignored. Given any vertex $i$, any spanning tree $T$ of $G^u_{SI}$ can be uniquely ``lifted'' to a directed spanning tree $T_i$ rooted at $i$, and all rooted spanning trees arise in this way.\footnote{In more detail, there is a bijection $\Phi_{i,j} : \Theta_i(G_{SI}) \to \Theta_j(G_{SI})$ between spanning trees rooted at any two vertices $i$ and $j$ obtained by reversing  all the edges along the unique directed path from $j$ to $i$ \cite{wcg18}.} See Figure \ref{fig:lifting} for an illustration of this construction.

\begin{figure}
	\begin{center}
		\includegraphics[scale=0.5]{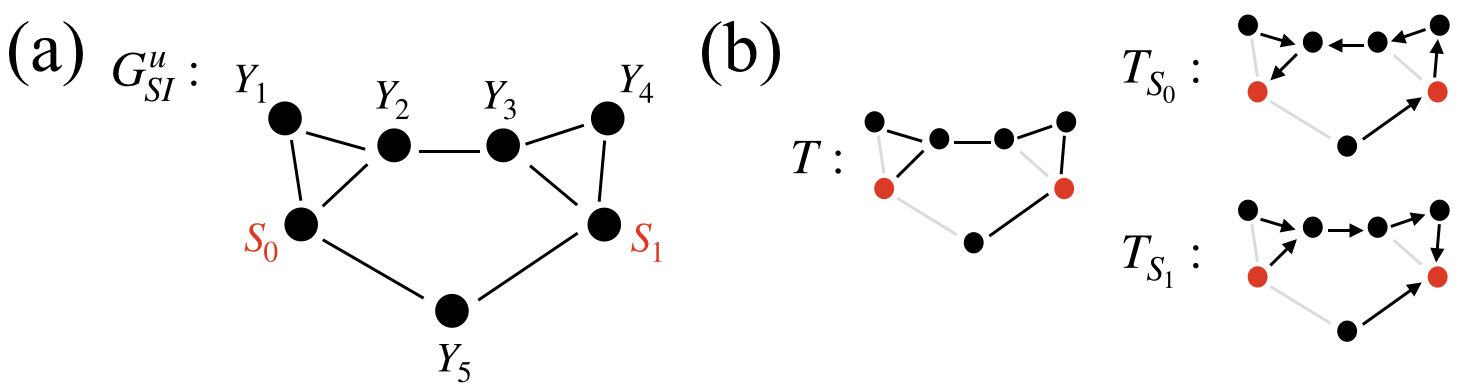}
	\end{center}
	\caption{{\bf Lifting the undirected spanning trees of $G_{SI}^u$.} (a) The graph $G^u_{SI}$ associated with the covalent modification cycle of Figure \ref{fig:gSI}, together with (b) an example spanning tree $T$ of $G^u_{SI}$ and the associated directed spanning trees $T_{S_0}$ and $T_{S_1}$ rooted at $S_0$ and $S_1$, respectively.}
	\label{fig:lifting}
\end{figure}

Applied to $G_{SI}$, the MTT yields:
\begin{equation}
\frac{[S_1]}{[S_0]}=\frac{\sum_T w(T_{S_1})}{\sum_T w(T_{S_0})}
\end{equation}
where the sum is over the spanning trees of $G^u_{SI}$, and $T_{S_1}$, $T_{S_0}$ are the directed spanning trees rooted at $S_1$ and $S_0$, respectively.

Every spanning tree of $G^u_{SI}$ contains a unique path between $S_0$ and $S_1$---if there were multiple paths, it would fail to be a tree, and if there were no path, the tree would fail to include every vertex (it would fail to be ``spanning"). Let $U$ be the set of spanning trees where the path between $S_0$ and $S_1$ involves intermediates containing $E$. Let $V$ be the set of spanning trees where the path between $S_0$ and $S_1$ involves intermediates containing $F$. Every spanning tree of $G^u_{SI}$ lies in either $U$ or $V$, never both.

Furthermore, if $T \in U$, the weights $w(T_{S_0})$, $w(T_{S_1})$ must be linear in $[E]$, because the directed trees must leave $S_1$ (resp. $S_0$) by exactly one edge, and it must be an edge carrying $[E]$ on its label since $T$ lies in $U$. Similarly, if $T$ lies in $V$, the weights $w(T_{S_0})$, $w(T_{S_1})$ must be linear in $[F]$.

It follows that we can write
\begin{equation}\label{alternate}
\frac{[S_1]}{[S_0]}=\frac{\sum_{T \in U} w(T_{S_1})+\sum_{T \in V} w(T_{S_1})}{\sum_{T \in U} w(T_{S_0})+\sum_{T \in V} w(T_{S_0})} = \frac{A [E] + B [F]}{C [E] + D [F]} = \frac{A ([E]/[F]) + B}{C ([E]/[F]) + D}
\end{equation}
where $A$, $B$, $C$, and $D$ are positive constants that do not depend on $[E]$ or $[F]$. We also have 
\begin{equation}
\frac{A D}{B C} = \frac{A [E] D [F]}{C [E] B [F] } = \frac{\sum_{T \in U} w(T_{S_1})\sum_{T \in V} w(T_{S_0})}{\sum_{T \in U} w(T_{S_0}) \sum_{T \in V} w(T_{S_1})}
\end{equation}

To proceed, we need the following claim.

\begin{claim}
	Suppose $a_1, b_1, c_1, d_1, a_2, b_2, c_2, d_2$ are nonzero real numbers such that
	\begin{equation}\label{mobiuseq}
	\frac{a_1 x + b_1}{c_1 x + d_1} = \frac{a_2 x + b_2}{c_2 x + d_2}
	\end{equation} 
	for all $x$. Then
	\begin{equation}
	\frac{a_1 d_1}{b_1 c_1} = \frac{a_2 d_2}{b_2 c_2}.
	\end{equation} 
\end{claim}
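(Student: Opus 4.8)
The plan is to reduce the equality of the two linear-fractional expressions to a polynomial identity and then read off the desired relation by comparing coefficients. First I would clear denominators: since the two sides of \eqref{mobiuseq} agree for all $x$ outside the (at most two) poles $x = -d_1/c_1$ and $x = -d_2/c_2$, cross-multiplying gives
\begin{equation}
(a_1 x + b_1)(c_2 x + d_2) = (a_2 x + b_2)(c_1 x + d_1)
\end{equation}
for all but finitely many $x$. Both sides are polynomials of degree at most two, so agreement at infinitely many points forces them to be equal as polynomials, and I may equate coefficients of like powers of $x$.

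Expanding both sides, the coefficient of $x^2$ gives $a_1 c_2 = a_2 c_1$, and the constant term gives $b_1 d_2 = b_2 d_1$. (The coefficient of $x^1$ yields a third relation, but I will not need it.) These two identities are exactly what control the quantity of interest.

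To finish, I would form the ratio of the two target expressions and substitute:
\begin{equation}
\frac{a_1 d_1 / (b_1 c_1)}{a_2 d_2 / (b_2 c_2)} = \frac{a_1 d_1\, b_2 c_2}{a_2 d_2\, b_1 c_1} = \left(\frac{a_1 c_2}{a_2 c_1}\right)\left(\frac{b_2 d_1}{b_1 d_2}\right).
\end{equation}
By the $x^2$-coefficient relation the first factor equals $1$, and by the constant-term relation the second factor equals $1$; hence the ratio equals $1$, which is precisely the claimed equality. The hypothesis that every coefficient is nonzero ensures that each division appearing here is legitimate.

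I do not anticipate a serious obstacle, since the argument is elementary once denominators are cleared. The only point requiring care is the justification for equating coefficients---namely, that an identity holding for all $x$ away from finitely many poles is a genuine polynomial identity---together with the observation that only the extreme (degree-two and degree-zero) coefficients are needed, so the middle coefficient can be discarded entirely. This is what makes the conclusion slightly weaker information than full proportionality of the coefficient vectors, yet exactly enough to pin down the product $a d / (b c)$.
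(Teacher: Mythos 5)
Your proof is correct and follows essentially the same route as the paper's: cross-multiply, treat the result as a polynomial identity, extract $a_1 c_2 = a_2 c_1$ from the $x^2$ coefficient and $b_1 d_2 = b_2 d_1$ from the constant term, and rearrange. Your added care about the finitely many poles (agreement off a finite set forcing a genuine polynomial identity) is a minor refinement the paper leaves implicit, not a different argument.
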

\begin{proof}
	\eqref{mobiuseq} implies 
	\begin{equation}
	(a_1 x + b_1)(c_2 x + d_2) = (a_2 x + b_2)(c_1 x + d_1).
	\end{equation}
	Expanding yields $(a_1 c_2 - a_2 c_1) x^2 + (a_1 d_2 - a_2 d_1 +b_1 c_2 - b_2 c_1) x + (b_1 d_2 - b_2 d_1) = 0$. Since this holds for all $x$, we find that $b_1 d_2 = b_2 d_1$ and $a_1 c_2 = a_2 c_1$. This means
	\begin{equation}
	1 = \frac{b_1 d_2}{b_2 d_1} = \frac{a_1 c_2}{a_2 c_1}
	\end{equation}
	from which the desired result follows by rearrangement.
\end{proof}

Equating \eqref{xu_ratio} and \eqref{alternate} and applying the claim yields
\begin{equation}
\frac{c_0^E c_1^F}{c_1^E c_0^F} = \frac{\sum_{T \in U} w(T_{S_1})\sum_{T \in V} w(T_{S_0})}{\sum_{T \in U} w(T_{S_0}) \sum_{T \in V} w(T_{S_1})}.
\end{equation}
Now let $T$ be an tree of $G_{SI}$ and suppose that $P$ is the unique \textit{directed} path from $S_0$ to $S_1$ in $T_{S_1}$. Let $P^*$ denote the reverse directed path from $S_1$ to $S_0$. Then it is easy to see that $w(T_{S_1})/w(T_{S_0}) = w(P)/w(P^*)$. By assumption that the thermodynamic force is the same about any cycle in our reaction network, \textit{this quantity depends only on whether $T$ lies in $U$ or $V$.}

It follows that, 
\begin{equation}
\frac{c_0^E c_1^F}{c_1^E c_0^F} = \frac{w(P)w(Q)}{w(P^*)w(Q^*)}, 
\end{equation}
where $P$ is any directed path from $S_0$ to $S_1$ through the intermediates containing $E$, $Q$ is any directed path from $S_1$ to $S_0$ through the intermediates containing $F$, and $P^*, Q^*$ are the respective reverses of those paths. 

Accordingly,
\begin{equation}
\log\left(\frac{c_0^E c_1^F}{c_1^E c_0^F}\right) = \log\left(\frac{w(P)w(Q)}{w(P^*)w(Q^*)}\right) = \log\left(\frac{w(C)}{w(C^*)}\right)
\end{equation}
where $C$ is the directed cycle in $G_{SI}$ formed by the concatenation of $P$ and $Q$. But the term $\log\left(\frac{w(C)}{w(C^*)}\right)$ is exactly a log ratio of rates about a cycle, as we considered in our physical discussion above. It is therefore equal to the thermodynamic force, or chemical potential difference $\Delta \mu / k_\mathrm{B}T$, holding the system out of equilibrium. This proves \eqref{Force-from-CE}.

\subsection*{The dynamic range}

The groundwork we have laid now allows us to establish bounds---in terms of $\Delta \mu / k_\mathrm{B} T$---on the characteristics of a switch based on any covalent modification cycle, taking the ``output" variable of the switch to be the logarithm of steady-state ratio $[S_1] / [S_0]$ and ``input" variables to be the conserved enzyme totals.

The simplest characteristic of a switch is the difference between the largest and smallest values its output variable can assume---the \emph{dynamic range}. From \eqref{SIgraph}, we have:
\begin{equation}
\log \left( \frac{c_0^F}{c_1^F}\right) < \log\left(\frac{[S_1]}{[S_0]}\right) < \log \left( \frac{c_0^E}{c_1^E}\right)
\label{dynamicrange}
\end{equation} 
and these limits can be approached---the larger when $F_\mathrm{tot} \to 0$ and the smaller when $E_\mathrm{tot} \to 0$. The dynamic range of the switch is then the difference of these extremes, which by  \eqref{Force-from-CE}, is simply equal to $\Delta \mu / k_\mathrm{B} T$. In the special case where each enzyme forms only one intermediate enzyme-substrate complex, this thermodynamic bound on the dynamic range was found by Qian \cite{Qian_2003}.

\subsection*{The high substrate regime}

We now turn to the \textit{sensitivity} of a switch---how sharply can the output respond to a small change in the input? Our main result is that, in the limit where $S_\mathrm{tot}$ is very large compared to the enzyme totals $E_\mathrm{tot}$ and $F_\mathrm{tot}$ and the generalized Michaelis-Menten constants, the sensitivity is bounded by a simple function of the switch parameters. To be precise, the ratio $[S_1]/[S_0]$ has a well-defined limit as $S_\mathrm{tot} \to \infty$, and it is this limiting value, which depends on the rate constants and conserved enzyme totals, whose sensitivity we bound: 
\begin{equation}\label{highStot-pre}
\left|\frac{\partial}{\partial \log E_\mathrm{tot}} \log \left(\lim_{S_\mathrm{tot} \to \infty}  \frac{[S_1]}{[S_0]}\right)\right| \leq \frac{\sqrt{c_0^E c_1^F/c_1^E c_0^F}-1}{2}.
\end{equation}
By \eqref{Force-from-CE}, the right hand side can expressed in terms of the thermodynamic force alone:
\begin{equation} \label{highStot}
\left|\frac{\partial}{\partial \log E_\mathrm{tot}} \log \left(\lim_{S_\mathrm{tot} \to \infty}  \frac{[S_1]}{[S_0]}\right)\right|  \leq \frac{\exp\left(\Delta \mu/2 k_\mathrm{B} T\right)-1}{2}.
\end{equation}

\eqref{highStot} is our main result. We note that in the limit of thermodynamic irreversibility $\Delta \mu/k_\mathrm{B} T \to \infty$, the right hand side of \eqref{highStot} diverges, consistent with the finding of ``unlimited'' sensitivity by Goldbeter and Koshland, as discussed in the Introduction.  

To prove the bound \eqref{highStot-pre}, our starting point is \eqref{Invariant-2}.  In prior work \cite{dcg12}, it was shown that \eqref{Invariant-2} implies the existence of the limit
\[
\sigma = \lim_{S_\mathrm{tot} \to \infty} [S_1]/[S_0],
\] 
and that the limiting value $\sigma$ is the unique positive solution of the quadratic equation (Eq.~(18), \cite{dcg12}):
\begin{linenomath}
	\begin{align}
	\frac{E_{\textmd{tot}}}{F_{\textmd{tot}}}\Big(\frac{1}{K_{0}^F} + \frac{\sigma}{K_{1}^F}\Big)(c_{0}^E - c_{1}^E \sigma) &= \Big(\frac{1}{K_{0}^E} + \frac{\sigma}{K_{1}^E}\Big)(c_{1}^F\sigma - c_{0}^F).
	\label{limiting_invariant}
	\end{align}
\end{linenomath}
This equation can be rearranged to express $\frac{E_{\textmd{tot}}}{F_{\textmd{tot}}}$ as a rational function of $\sigma$,
\begin{linenomath}
	\begin{align}
	\frac{E_{\textmd{tot}}}{F_{\textmd{tot}}} &= \frac{\Big(\frac{1}{K_{0}^E} + \frac{\sigma}{K_{1}^E}\Big)(c_{1}^F\sigma - c_{0}^F)}{\Big(\frac{1}{K_{0}^F} + \frac{\sigma}{K_{1}^F}\Big)(c_{0}^E - c_{1}^E \sigma)},
	\end{align}
\end{linenomath}
with numerator and denominator both positive for any value of $\sigma$ within the bounds on its value set by our dynamic range result \eqref{dynamicrange}.  This means $E_\mathrm{tot}/F_\mathrm{tot}$ can be viewed as a continuously differentiable function of $\sigma$.

Rewriting \eqref{limiting_invariant} in terms of variables $x = \log\left(\frac{E_{\textmd{tot}}}{F_{\textmd{tot}}}\right)$ and $y = \log{\sigma}$, we have:
\begin{equation}
x = \log{(K_1^E + K_0^E e^{y})} + \log{(c_1^F e^{y} - c_0^F)} - \log{(K_1^F + K_0^F e^{y})} - \log{(c_0^E - c_1^E e^{y})} + \log{\Bigg(\frac{K_0^F K_1^F}{K_0^E K_1^E}\Bigg)}.
\end{equation}

Our goal is to bound the derivative $d y / d x$. We will do this by studying the derivative of the inverse, which is:
\begin{equation}
\frac{dx}{dy} = \left( \frac{e^{y}}{e^{y} - \frac{c_0^F}{c_1^F}} + \frac{e^{y}}{\frac{c_0^E}{c_1^E} - e^{y}} \right) +  \left( \frac{e^{y}}{\frac{K_1^E}{K_0^E} + e^{y}} - \frac{e^{y}}{\frac{K_1^F}{K_0^F}+ e^{y}} \right).
\label{inverse_derivative}
\end{equation}

The second term in brackets is the difference of two positive quantities no larger than one, so it can be no smaller than $-1$. It approaches $-1$ when $\frac{K_1^E}{K_0^E} \rightarrow \infty$ and $\frac{K_1^F}{K_0^F} \rightarrow 0$. 

The first term in brackets is minimized, for fixed values of the tgCEs, when $y= \log{\sqrt{\frac{c_0^F c_0^E}{c_1^E c_1^F}}}$, when it takes the value $\left(\sqrt{\frac{c_1^F c_0^E}{c_0^F c_1^E}}+1\right)/\left(\sqrt{\frac{c_1^F c_0^E}{c_0^F c_1^E}}-1\right)$.

Therefore, 
\begin{equation}
\frac{dx}{dy}  \geq \frac{\sqrt{\frac{c_1^F c_0^E}{c_0^F c_1^E}}+1}{\sqrt{\frac{c_1^F c_0^E}{c_0^F c_1^E}} - 1} - 1 > 0.
\end{equation}
Since the function relating $x$ and $y$ is continuously differentiable and this derivative is never zero, this implies
\begin{equation}
\frac{dy}{dx}  \leq \left(\frac{\sqrt{\frac{c_1^F c_0^E}{c_0^F c_1^E}}+1}{\sqrt{\frac{c_1^F c_0^E}{c_0^F c_1^E}} - 1} - 1\right)^{-1} = \frac{\sqrt{c_0^E c_1^F/c_1^E c_0^F}-1}{2}.
\end{equation}
which establishes \eqref{highStot-pre}. Finally, applying \eqref{Force-from-CE} yields \eqref{highStot}. See Figure \ref{fig:bounds}(a) for a numerical illustration of this result. 

In prior work, Qian \cite{Qian_2003}---studying a reversible covalent-modification switch with a single intermediate for each enzyme---gave an asymptotic formula for the derivative $\partial ([S_1]/S_\mathrm{tot}) / \partial \log E_\mathrm{tot}$, in the high substrate regime and at a point where $[S_1]/S_\mathrm{tot} = 1/2$. Starting from our \eqref{inverse_derivative}, and taking $y = 0$, we can develop a similar expression:
\begin{equation}
\frac{\partial}{\partial \log E_\mathrm{tot}} \log \left(\lim_{S_\mathrm{tot} \to \infty}  \frac{[S_1]}{S_\mathrm{tot}}\right) =\frac{1}{4} \left[\left( \frac{1}{1 - \frac{c_0^F}{c_1^F}} + \frac{1}{\frac{c_0^E}{c_1^E} - 1} \right) + \left( \frac{1}{\frac{K_1^E}{K_0^E} + 1} - \frac{1}{\frac{K_1^F}{K_0^F}+ 1} \right) \right]^{-1},
\end{equation}
when $[S_1]/S_\mathrm{tot} = 1/2$. This result may be compared to Eq.~(17) of Qian \cite{Qian_2003}, but it is not equivalent to it.  Our equation reduces, under the multiple limits simultaneously taken by Qian, to the thermodynamic parts of his expression. However, Qian's equation additionally includes terms proportional to $1 / S_\mathrm{tot}$, which cannot appear in our expression because we have already taken the limit $S_\mathrm{tot} \to \infty$. 

\subsection*{The low substrate regime}
The sensitivity bound \eqref{highStot} for the high substrate regime can be viewed as a companion to a corresponding result in the low substrate regime \cite{Owen_Horowitz_2020}:
\begin{equation} \label{lowStot}
\left| \frac{\partial}{\partial \log E_\mathrm{tot}} \log \left(\lim_{S_\mathrm{tot}\to 0} \frac{[S_1]}{[S_0]}\right)\right| \leq \tanh\left(\Delta \mu/4 k_\mathrm{B} T\right).
\end{equation}
Note that the right hand side is less than $1$: there can be no ultrasensitivity in the low substrate regime. This result is a manifestation of a recently identified \cite{Owen_Horowitz_2020} universal thermodynamic bound on the response of nonequilibrium systems to perturbations. It also follows directly from the approach described here, as we now show.  

In the limit where $S_\mathrm{tot}$ is very small, we have $[E] \to E_\mathrm{tot}$, $[F] \to F_\mathrm{tot}$, which together with \eqref{xu_ratio} implies
\begin{equation}
\lim_{S_\mathrm{tot} \to 0} \frac{[S_1]}{[S_0]} = \frac{c_0^E(E_\mathrm{tot}/F_\mathrm{tot})+c_0^F}{c_1^E(E_\mathrm{tot}/F_\mathrm{tot})+c_1^F} .
\end{equation}
Let $\phi$ denote this limiting quantity and set $z = E_\mathrm{tot} / F_\mathrm{tot}$. Taking the derivative, we find that
\[
\frac{\partial}{\partial \log E_\mathrm{tot}} \log (\phi) = \frac{c_0^E c_1^F z- c_0^F c_1^E z}{(c_0^F + c_0^E z)(c_1^F+c_1^E z)} = \frac{\left(\sqrt{c_0^E c_1^F z}- \sqrt{c_0^F c_1^E z}\right)\left(\sqrt{c_0^E c_1^F z}+ \sqrt{c_0^F c_1^E z}\right)}{(c_0^F + c_0^E z)(c_1^F+c_1^E z)}. 
\]
Using the inequality of the arithmetic and geometric means, 
\[
(c_0^F + c_0^E z)(c_1^F+c_1^E z) \geq \left(\sqrt{c_0^E c_1^F z}+ \sqrt{c_0^F c_1^E z}\right)^2.
\]
Recalling that $\tanh(x) = (e^{2x} - 1)/(e^{2x} + 1)$, we see that, 
\begin{equation}
\left|\frac{\partial}{\partial \log E_\mathrm{tot}} \log (\phi) \right| \leq \frac{\sqrt{c_0^E c_1^F z}- \sqrt{c_0^F c_1^E z}}{\sqrt{c_0^E c_1^F z}+ \sqrt{c_0^F c_1^E z}} = \tanh\left(\frac{1}{4}\log\left(\frac{c_0^E c_1^F}{c_0^F c_1^E}\right)\right).
\end{equation}
Finally, applying relation \eqref{Force-from-CE} yields \eqref{lowStot}. See Figure \ref{fig:bounds}(b) for a comparison of this bound to that for the high substrate regime. The bound on the sensitivity in the high substrate regime is larger than the bound in the low substrate regime for all nonzero values of $\Delta \mu/ k_\mathrm{B} T$.  For small $\Delta \mu/ k_\mathrm{B} T$, the two bounds are equal to first order. 

\begin{figure}
	\begin{center}
		\includegraphics[scale=0.6]{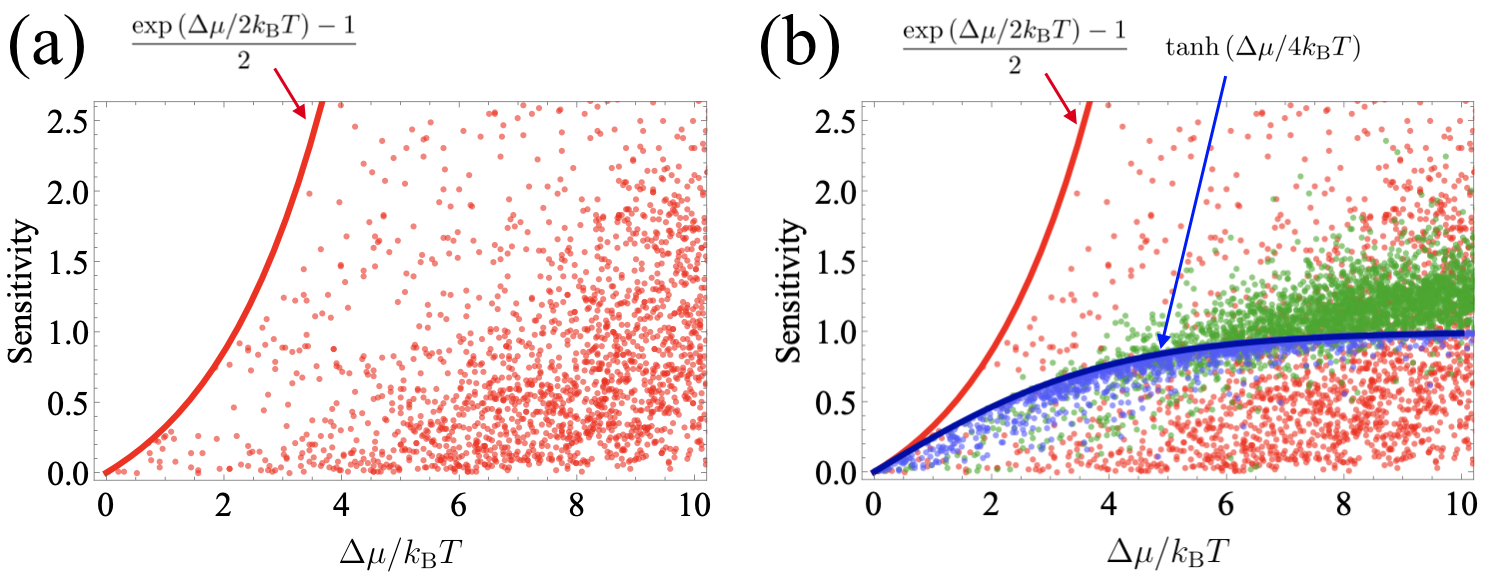}
	\end{center}
	\caption{{\bf Thermodynamic force bounds sensitivity.} Numerical illustration of our bounds by random sampling of rate constants for a covalent modification cycle in which both enzymes have single intermediates (see Methods for details). (a) Points show values of the thermodynamic force and sensitivity achieved when $S_\mathrm{tot} = 20000$ (red). In (b), the same is plotted together with $S_\mathrm{tot} = 2$ (green), and $S_\mathrm{tot} = 0.01$ (blue). In all cases $E_\mathrm{tot} = F_\mathrm{tot} = 1$. These numerical results are compared to the bounds \eqref{highStot} (solid red curve) and \eqref{lowStot} (solid blue curve).}
	\label{fig:bounds}
\end{figure}

\subsection*{Saturating our bounds}

Remarkably, at any fixed force, there are covalent modification cycles that, with the same kinetic parameters, can saturate either bound \eqref{lowStot} or \eqref{highStot}, as the substrate concentration is varied. Maximal sensitivity, in both $S_\mathrm{tot}$ regimes, can be achieved when the enzymes act as ``mirrors" of each other (i.e.~the network is invariant under the exchange $E \leftrightarrow F$ and $S_0 \leftrightarrow S_1$). 

For example, consider a covalent modification cycle with $c_0^E = c_1^F = \exp\left(\Delta \mu / 2 k_\mathrm{B}T\right)$, $c_1^E  = c_0^F = 1$, and the ``forward" generalized Michaelis-Menten constants  $K_0^E = K_1^F \equiv K_M$ very small compared to the ``reverse" constants $K_1^E = K_0^F \equiv K_R$. For such a covalent modification cycle, \eqref{inverse_derivative} gives the maximal sensitivity in the high $S_\mathrm{tot}$ regime explicitly as:
\begin{equation}
\frac{\exp\left(\Delta \mu / 2 k_\mathrm{B}T\right)-1}{2} \times \frac{1+K_M/K_R}{1+ \exp\left(\Delta \mu / 2 k_\mathrm{B}T\right) K_M/K_R},
\end{equation} 
saturating our bound \eqref{highStot} when $K_R  \gg  K_M$. For the same parameters, in the low $S_\mathrm{tot}$ regime we have
\begin{equation}
\frac{[S_1]}{[S_0]} \approx \frac{c_0^E (E_\mathrm{tot}/F_\mathrm{tot})+ c_0^F}{c_1^E (E_\mathrm{tot}/F_\mathrm{tot}) + c_1^F} = \frac{\exp\left(\Delta \mu / 2 k_\mathrm{B}T\right)(E_\mathrm{tot}/F_\mathrm{tot})+ 1}{(E_\mathrm{tot}/F_\mathrm{tot}) + \exp\left(\Delta \mu / 2 k_\mathrm{B}T\right)},
\end{equation}
leading to a maximal sensitivity of $\tanh\left(\Delta \mu / 4 k_\mathrm{B}T\right)$, saturating the low $S_\mathrm{tot}$ bound \eqref{lowStot}. These results are illustrated in Figure \ref{fig:saturation}.

\begin{figure}
	\begin{center}
		\includegraphics[scale=0.45]{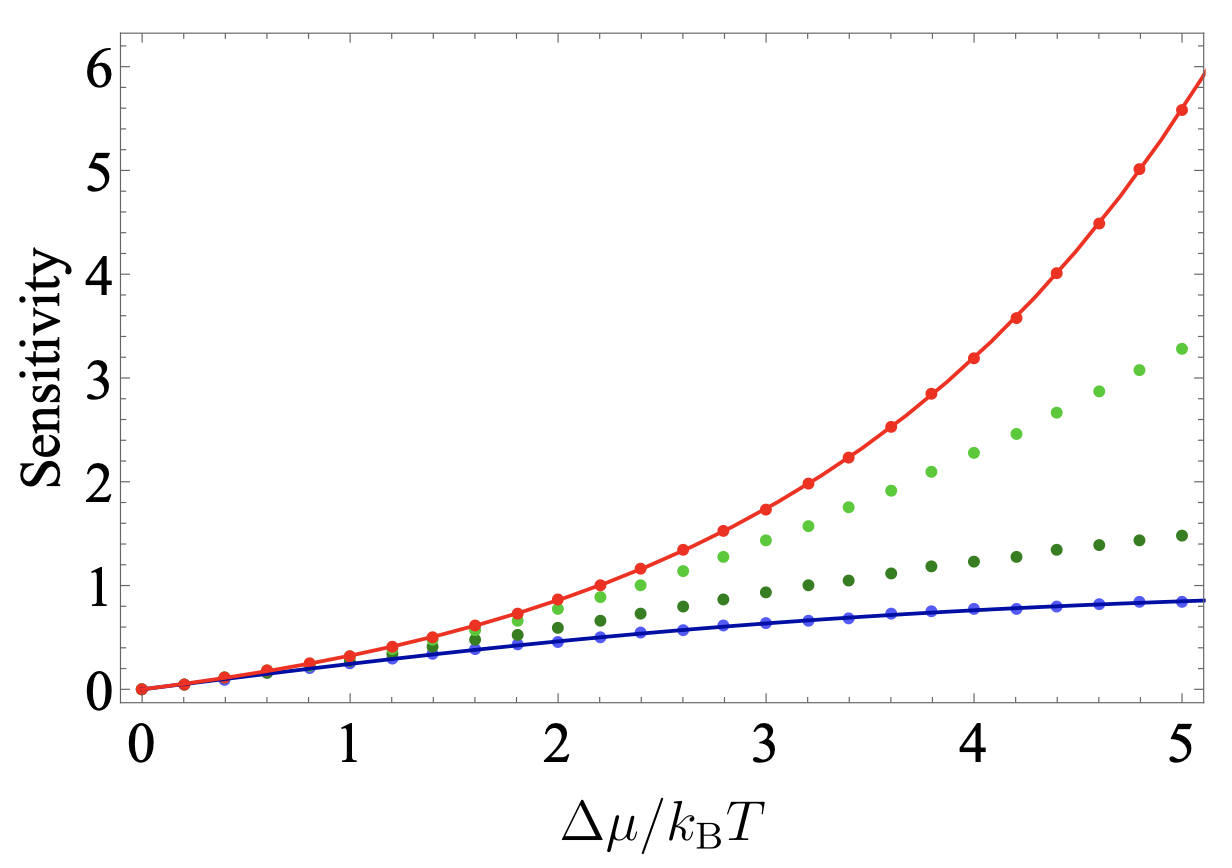}
	\end{center}
	\caption{{\bf The same parameter choice can saturate the low and high $S_\mathrm{tot}$ bounds.} Sensitivity bounds in the high (solid red line) and low (solid blue line) $S_\mathrm{tot}$ regimes compared to the sensitivity (colored points) evaluated at $E_\mathrm{tot} = F_\mathrm{tot} = 1$ for a covalent modification cycle with $K_0^E = K_1^F = 10^{-4}$, $K_1^E = K_0^F = 10^3$, $c_0^E = c_1^F = \exp\left(\Delta \mu / 2 k_\mathrm{B}T\right)$, $c_1^E  = c_0^F = 1$. Points corresponding to $S_\mathrm{tot}$ values of $100$, $1.75$, $1$, and $0.01$ are colored red, light green, dark green, and blue, respectively.}
	\label{fig:saturation}
\end{figure}

\section*{Discussion}

The bounds we have presented provide a quantitative picture of how general covalent modification cycles are constrained by thermodynamics. We have framed our results in term of the chemical potential $\Delta \mu$, which is the nonequilibrium driving force of the system, and is a natural way to quantify the energetic requirements of the switch. To see this, it is important to note that there can be no \emph{direct} trade-off between the \emph{rate} of (free) energy expenditure---the power---and steady-state properties of the system like its sensitivity or dynamic range. This is because the power can be scaled arbitrarily by scaling all reaction rates by the same constant, or, equivalently, by rescaling time, whereas this operation leaves the steady state completely unchanged. To compare it to steady-state quantities, power must be divided by some other rate relevant to the system. The quantity $\Delta \mu$ is equal to the energy consumed per cycle completion, i.e., it is the power divided by the rate of cycle completion.    

Our results highlight the fact that under common circumstances, thermodynamics does not tightly constrain the sensitivity of biochemical switches. For example, in a phosphorylation-dephosphorylation cycle driven by ATP hydrolysis, typical physiological values of $\Delta \mu \approx 20 - 30 \, k_\mathrm{B} T$ are deep in the saturated regime of the low $S_\mathrm{tot}$ bound, and in the high $S_\mathrm{tot}$ case, lead to a maximum possible sensitivity of $\sim 10^4 \text{ to } 10^6$, far in excess what is needed to account for typical Hill coefficients measured in ultrasensitive systems \cite{Ferrell2014}. The sensitivity of such strongly driven switches may be constrained by other factors, such as kinetics or the abundance of substrate. 

Our bounds also show that, from the perspective of making a good switch based on covalent modification, enzymological complexity provides no benefit. No matter the number of enzymatic intermediates or how elaborate their reactions, the same thermodynamic bounds on sensitivity hold. This is in contrast to numerous examples in biophysics where having more states or ``steps" provides some advantage. For example, in kinetic proofreading, having more proofreading steps allows for a degree of discrimination that can never be achieved with fewer, even as $\Delta \mu \to \infty$ \cite{murugan2014discriminatory, ouldridge2017thermodynamics, Owen_Horowitz_2020}. As another example, the maximum possible coherence of biochemical oscillations is conjectured to depend strongly on the number of states available, together with the strength of nonequilibrium driving \cite{barato2017coherence}. 

A number of basic questions remain. First, what is the bound on sensitivity in terms of $\Delta \mu$ \textit{and} $S_\mathrm{tot}$? We have only studied the limiting cases in which substrate is very scarce or abundant. It is natural to conjecture that the maximum possible sensitivity is increasing in $S_\mathrm{tot}$, and that therefore our high substrate bound \eqref{highStot} in fact holds for all $S_\mathrm{tot}$.

Second, in this paper we have focused on steady-state behavior. It would be very interesting to also understand the constraints on the transient behavior. For example, in vision, the exceptional amplification that enables rod cells to respond one or a few photons involves a transient response of a modification cycle involving rhodopsin and transducin, driven by GTP hydrolysis \cite{arshavsky2014current, baylor1996photons}.

Finally, we note that even simpler properties of general covalent modification cycles remain incompletely understood. For example, to our knowledge, it remains an open problem to prove the monotonicity of the steady-state ratio $[S_1]/[S_0]$ as a function of the enzyme totals. Such matters may at first seem only of mathematical interest, but we think understanding them carefully could bear fruit---especially in the study of systems, such as signaling cascades, in which covalent modification cycles appear as parts \cite{barabanschikov2019monostationarity}.

\section*{Methods}

To generate Figure \ref{fig:bounds}, the sensitivity was evaluated numerically for the simplest reversible covalent modification cycle
\[
\begin{array}{c}
	E + S_0  \overset{a}{\underset{b}\rightleftarrows} 
	 ES  \overset{c}{\underset{d}\rightleftarrows}  E + S_1 \\
	F + S_1 \overset{s}{\underset{r}\rightleftarrows} FS  \overset{q}{\underset{k}\rightleftarrows} F + S_0
\end{array}
\]
for random choices of the rate constants $a$, $b$, $c$, $d$, $s$, $r$, $q$, and $k$. Explicitly, for the $S_\mathrm{tot} = 20000$ (red) and $S_\mathrm{tot} = 2$ (green) points, all rate constants were drawn uniformly from the interval $(0, 10)$ except for $d$ and $k$ which were drawn uniformly from $(0, 0.1)$. For the $S_\mathrm{tot} = 0.01$ (blue) points, all rate constants were drawn uniformly from the interval $(0, 10)$ except for $d$ and $k$ which were drawn uniformly from $(0, 0.5)$. The polynomial equations defining the steady state where solved numerically (using {\tt NSolve} in \emph{Wolfram Mathematica}), and the derivative defining the sensitivity at $E_\mathrm{tot} = F_\mathrm{tot} = 1$ was estimated by taking a finite difference. For this covalent modification cycle, $\Delta \mu / k_\mathrm{B}T = \log\left(\frac{acsq}{bdrk}\right)$.

To generate Figure \ref{fig:saturation}, \eqref{Invariant-1} and \eqref{Invariant-2} were solved numerically, with $K_0^E = K_1^F = 10^{-4}$, $K_1^E = K_0^F = 10^3$, $c_0^E = c_1^F = \exp\left(\Delta \mu / 2 k_\mathrm{B}T\right)$, $c_1^E  = c_0^F = 1$, and varying values of $S_\mathrm{tot}$. The sensitivity at $E_\mathrm{tot} = F_\mathrm{tot} = 1$ was estimated by taking a finite difference.  

\section*{Acknowledgments}

JWB and JG were supported by grant 1462629 from the US National Science Foundation.

\bibliographystyle{plain}

\end{document}